\title{The Simplest Proof of Parikh's Theorem via Derivation Trees} 
\author{Alexander {Rubtsov}}{National Research University Higher School of Economics \and Moscow Institute of Physics and Technology  }{rubtsov99@gmail.com}{https://orcid.org/0000-0001-8850-9749}{} 
\authorrunning{A.~Rubtsov} 
\keywords{Formal Languages, Context-Free Languages, Parikh's Theorem} 
\let\geq\geqslant
\let\es\varnothing
\def\NN{\mathbb N}
\let\epsilon\varepsilon
\let\eps\varepsilon
	\def\yields{\xRightarrow{{\scriptstyle *\; }}}
	\def\ders{ \vdash^{\!\!\! {}^{ *}}}
	\def\der{ \vdash }
\def\Lin{\mathrm{Lin}}
\pgfplotsset{compat=1.15}
\begin{document}

\maketitle

\begin{abstract}
	Parikh's theorem is a fundamental result of the formal language's theory. There had been published many proofs and many papers claimed to provide a simplified proof, but most of them are long and still complicated. We provide the proof that is really short, simple and discloses the nature of this fundamental result. We follow the technique closed to the original Parikh's paper and our proof is similar to the proof by Ryoma Sin'ya 2019, but we provide more detailed exposition and pretend to more simplicity as well. We achieve the simplicity via nonconstructivenes that allows us avoiding many difficulties met by other proofs.
\end{abstract}

\section{Introduction}

Parikh's theorem~\cite{ParikhOriginal} is a fundamental theorem of the formal language's theory. There had been published many proofs (see \cite{ESPARZA2011614}, \cite{KochSurvey} for the survey of different proofs and detailed exposition on the topic). Many papers claimed to provide a simplified proof (\cite{goldstine1977simplified} looks to be the most known), but most of them are long and still complicated. Despite really short and simple proofs are already known (e.g.,~\cite{ShallitSC}), papers with another proofs continues to be published~\cite{KogaIJFCS21}. We provide the proof that is really short, simple and discloses the nature of this fundamental result. Our proof is based on derivations trees so as the original Parikh's proof~\cite{ParikhOriginal}. As Parikh, we decompose derivation trees into small ones; we consider similar kinds of trees: ordinary derivation trees and auxiliary ones (with only nonterminal in the crown that is the same as the root). We get rid of duplicates in auxiliary trees and pump them. While our technique is similar to the original proof in general, it is simpler since we do not have restrictions on grammar and other technical issues (like considering derivation trees that contains nonterminals from a fixed subset). The simplicity of our construction is based on nonconstructivenes that allows us avoiding many technical issues. Also trees (in the decomposition) in our construction have linear height (in the number of nonterminals) while in the Parikh's construction trees have quadratic height. In our proof we generalize the idea of derivation from words to trees and translate the idea of the pumping lemma to the trees as well. It makes our construction clear enough to explain students during a lecture in a basic course of formal languages and automata theory, and so we hope that the detailed version will help to spread it in the community.

Our technique is similar to~\cite{ProofAlaTakahashi}. Unlike concise exposition in~\cite{ProofAlaTakahashi}, we provide more detailed exposition and explain the intuition of the proof. And we do not use auxiliary results for our proof (thanks to nonconstructivenes).



\section{Definitions}
We denote by $\NN$ non-negative integers. Let $\Sigma_k = \{a_1, \ldots, a_k\}$ be an alphabet, $k \geq 1$. We denote by $\Psi: \Sigma^*_k \to \NN^k $ the \emph{Parikh mapping} that maps a word $w$ to its \emph{Parikh's image} the vector $(|w|_{a_1}, |w|_{a_2}, \ldots, |w|_{a_k})$, where $|w|_{a_j}$ is the number of letters $a_j$ in $w$. We denote the Parikh's image of a language $L \subseteq \Sigma_k^*$ in the natural way: $$ \Psi(L) = \{ \Psi(w) \mid w \in L \}. $$

A set $S \subseteq \NN^k$ is \emph{linear} if there exist vectors $v_0, v_1, \ldots, v_m \in \NN^k$ such that
$$ S = \{ v_0 + v_1 t_1 + v_2 t_2 + \ldots v_m t_m \mid t_1, \ldots t_m \in \NN \}. $$
A set is \emph{semi-linear} if it is a union of finitely many linear sets.

\begin{theorem}[Parikh]
	For each context-free language $L$ the set $\Psi(L)$ is semi-linear.
\end{theorem}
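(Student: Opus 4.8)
The plan is to prove Parikh's theorem via derivation trees, following the strategy announced in the introduction. Let $G = (N, \Sigma_k, P, S)$ be a context-free grammar generating $L$, and let $n = |N|$ be the number of nonterminals. The central idea is to decompose every derivation tree into bounded pieces and pumpable pieces, then argue that the Parikh image is controlled by finitely many such decompositions.

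First I would introduce two kinds of trees. An \emph{ordinary} derivation tree is one whose root is labelled by a nonterminal $A$ and whose leaves (the crown) are labelled by terminals; such a tree witnesses $A \yields w$ for some $w \in \Sigma_k^*$. An \emph{auxiliary} tree is one whose root is labelled by some $A \in N$ and whose crown consists of a single occurrence of the same nonterminal $A$ together with terminal letters elsewhere; such a tree witnesses $A \yields u A v$. The key combinatorial observation, analogous to the pumping lemma, is that an auxiliary tree can be repeated (\emph{pumped}): if $A \yields u A v$, then $A \yields u^t w v^t$ whenever $A \yields w$, and every such pumping adds a fixed vector $\Psi(u) + \Psi(v)$ to the Parikh image. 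I would formalize grafting one tree into the $A$-leaf of another and record its effect on Parikh images additively.

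Next I would prove the decomposition lemma: any derivation tree can be written as a bounded-height \emph{core} tree with finitely many auxiliary trees inserted along its paths. The idea is that whenever a root-to-leaf path contains a repeated nonterminal $A$, the segment between two occurrences is an auxiliary tree that can be excised. Excising all such repetitions leaves a tree in which no nonterminal repeats on any path, so its height is at most $n$ (here I exploit the linear-height improvement claimed in the introduction, in contrast to Parikh's quadratic bound). The nonconstructive step—the source of the promised simplicity—is that I do not track which auxiliary trees are reattached where; I only observe that there are finitely many distinct reduced core trees (bounded height bounds their number) and finitely many \emph{minimal} auxiliary trees (auxiliary trees with no internal path repetition of the relevant nonterminal, hence bounded size), and that every word of $L$ arises by grafting some multiset of minimal auxiliary trees onto some core tree.

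Finally I would assemble the semilinear set. For each core tree $C$ with yield $w_C$, let $v_0 = \Psi(w_C)$, and let $v_1, \dots, v_m$ enumerate the Parikh images $\Psi(u_i) + \Psi(v_i)$ of the finitely many minimal auxiliary trees compatible with $C$ (i.e.\ whose root nonterminal occurs in $C$). Then every word obtained by grafting copies of these auxiliary trees onto $C$ has Parikh image in the linear set $\{ v_0 + \sum_i v_i t_i \mid t_i \in \NN\}$, and conversely every such vector is realized by choosing grafting multiplicities. Taking the union over the finitely many core trees yields a semilinear set equal to $\Psi(L)$. The main obstacle I anticipate is the converse direction of this last claim: showing that arbitrary multiplicities $t_i$ are actually achievable by valid grafting—one must check that an auxiliary tree for $A$ can always be inserted as long as $A$ appears somewhere in the already-built tree, so that pumping each chosen auxiliary tree independently and unboundedly is legitimate. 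Handling this cleanly is exactly where the nonconstructive framing pays off, since I need only existence of \emph{some} attachment point rather than an explicit bookkeeping of positions.
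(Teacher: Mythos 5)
Your overall strategy matches the paper's (decompose into a bounded ``core'' ground tree plus minimal pumpable block trees, then take a finite union of linear sets), but the way you choose the periods of each linear set has a genuine gap that breaks the forward inclusion $\Psi(L)\subseteq\bigcup_C L_C$. You attach to a core tree $C$ exactly those minimal auxiliary trees \emph{whose root nonterminal occurs in $C$}. But an auxiliary tree used in the decomposition of a word may have a root nonterminal that occurs only inside \emph{another} auxiliary tree, not in the core. Concretely, take $S\to T$, $T\to ATA\mid c$, $A\to a\mid bA$. The only minimal core is $S\yields c$, containing the nonterminals $S,T$ but not $A$; the minimal auxiliary trees are $T\yields aTa$ and $A\yields bA$. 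Your linear set for this core has the single period $\Psi(aa)$, so it misses $\Psi(baca)$, yet $baca\in L$. Enlarging ``compatible'' to a transitive closure (root occurs in $C$ or in some already-compatible auxiliary tree) breaks the reverse inclusion instead: the vector $\Psi(c)+\Psi(b)$ is not in $\Psi(L)$, because $A\yields bA$ cannot be grafted until $T\yields aTa$ has been grafted at least once, and your offset $v_0=\Psi(w_C)$ does not record that obligation.

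The missing idea is that the linear sets must be indexed not by the core alone but by the \emph{set of minimal trees that actually co-occur} in the decomposition of some word (the paper's $S'_w$), with offset $\Psi(S'_w)$ --- i.e.\ one copy of every auxiliary tree used, not just the core's yield --- and periods ranging over exactly those auxiliary trees. This guarantees every period's tree has an attachment point, and it is why the construction needs up to $2^m$ linear sets rather than one per core. It also forces the nontrivial half of the key lemma, which your proposal does not address: to show that the offset $\Psi(S'_w)$ itself lies in $\Psi(L)$ you must prove that \emph{deleting} surplus copies of an auxiliary tree from a realizable multiset keeps it realizable, which requires rerouting the trees that were grafted into a deleted copy onto a surviving copy (the second claim of the paper's Lemma~\ref{lemma:BaseSet}). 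Your nonconstructive framing correctly disposes of the insertion direction, but insertion alone is not where the difficulty lies.
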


We use classical notation for context-free grammars~\cite{hopcroft2001introduction}. We denote by $N$ the set of nonterminals, and denote nonterminals by capital letters. Small letters from the end of the alphabet denote words and Greek letters denote words over the alphabet $\Sigma\cup N$, called \emph{sentential forms}.


Our proof uses derivation trees and exploits the idea similar to the classical proof of the pumping lemma. We use derivation trees not only for words, but also for sentential forms of a special kind. We call a tree that corresponds to a derivation of the form $A \yields uAv$ a \emph{block tree}. We call derivation trees for words \emph{ground trees}. We say that a tree $T$ (a ground or a block one) is \emph{minimal} if it does not have a block tree as a subtree, i.e. $T $ does not have the form $ (S \yields x A z \yields xuAvz \yields xu\beta vz)$. Hereinafter, we \emph{describe} a tree by any derivation corresponding to the tree.


\begin{figure}[h]
	
	\begin{multicols}{3}
	
		\begin{tikzpicture}[line cap=round,line join=round,>=triangle 45,x=0.6cm,y=0.6cm]
		\clip(-0.05,-0.7) rectangle (9,7);
		\fill[line width=2pt,color=black,fill=white] (4,6) -- (0,0) -- (8,0) -- cycle; 
		\fill[line width=2pt,color=black,fill=black,fill opacity=0.1] (4,3) -- (2,0) -- (6,0) -- cycle;
		\fill[line width=2pt,color=black,fill=black,fill opacity=0.4] (4,1.5) -- (3,0) -- (5,0)  -- cycle;
		\draw [line width=2pt,color=black] (4,6)-- (0,0);
		\draw [line width=2pt,color=black] (0,0)-- (8,0);
		\draw [line width=2pt,color=black] (8,0)-- (4,6);
		\draw [line width=2pt,color=black] (4,3)-- (2,0);
		\draw [line width=2pt,color=black] (2,0)-- (6,0);
		\draw [line width=2pt,color=black] (6,0)-- (4,3);
		\draw [line width=2pt,color=black] (4,1.5)-- (3,0);
		\draw [line width=2pt,color=black] (3,0)-- (5,0);
		\draw [line width=2pt,color=black] (5,0)-- (4,1.5);
		\draw  (4.,6.4) node {$S$};

		\draw (4,3.4) node {$A$}; 
		\draw (4,1.9) node {$A$}; 

		\draw (1.00,-0.35) node {$x$}; 
		\draw (2.57,-0.35) node {$u$}; 
		\draw (4.09,-0.35) node {$y$}; 
		\draw (5.58,-0.35) node {$v$}; 
		\draw (7.10,-0.35) node {$z$}; 

		\end{tikzpicture}
		
		\columnbreak
								
				\hspace*{0.5cm}
				\begin{tikzpicture}[line cap=round,line join=round,>=triangle 45,x=0.6cm,y=0.6cm]
				\clip(-0.05,-0.7) rectangle (9,7);
				\fill[line width=2pt,color=black,fill=black,fill opacity=0.5] (4,3) -- (3,1.5) -- (5,1.5)  -- cycle;
				\draw [line width=2pt,color=black] (4,6)-- (0,0);
				\draw [line width=2pt,color=black] (0,0) -- (2,0);
				\draw [line width=2pt,color=black] (6,0) -- (4,3);
				\draw [line width=2pt,color=black] (6,0) -- (8,0);
				\draw [line width=2pt,color=black] (8,0)-- (4,6);
				\draw [line width=2pt,color=black] (4,3)-- (2,0);	
				\draw [line width=2pt,color=black] (6,0)-- (4,3);
				\draw [line width=2pt,color=black] (3,1.5)-- (5,1.5);
		\draw  (4.,6.4) node {$S$};

		\draw (4,3.4) node {$A$}; 


				\draw (4,1.1) node {$y$}; 

				\draw (1.00,-0.35) node {$x$}; 
				\draw (7.10,-0.35) node {$z$}; 

				\end{tikzpicture}
				
				\columnbreak
				
				
					\begin{tikzpicture}[line cap=round,line join=round,>=triangle 45,x=0.6cm,y=0.6cm]
					\clip(-0.05,-0.7) rectangle (9,7);
					\fill[line width=2pt,color=black,fill=black,fill opacity=0.1] (4,3) -- (2,0) 
					-- (3,0) -- (4,1.5) -- (5,0) -- (6,0) -- cycle;
					\draw [line width=2pt,color=black] (4,3)-- (2,0);
					\draw [line width=2pt,color=black] (2,0)-- (3,0);
					\draw [line width=2pt,color=black] (5,0)-- (6,0);					
					\draw [line width=2pt,color=black] (6,0)-- (4,3);
					\draw [line width=2pt,color=black] (4,1.5)-- (3,0);
					\draw [line width=2pt,color=black] (5,0)-- (4,1.5);


		\draw (4,3.4) node {$A$}; 

		\draw (4,1.9) node {$A$}; 
		\draw (2.57,-0.35) node {$u$}; 
		\draw (5.58,-0.35) node {$v$}; 

\end{tikzpicture}

	\end{multicols}
	\caption{Decomposition of tree $T$ into the pair of trees $(T_1, T_2)$}\label{fig::tree:decomposition}
\end{figure}
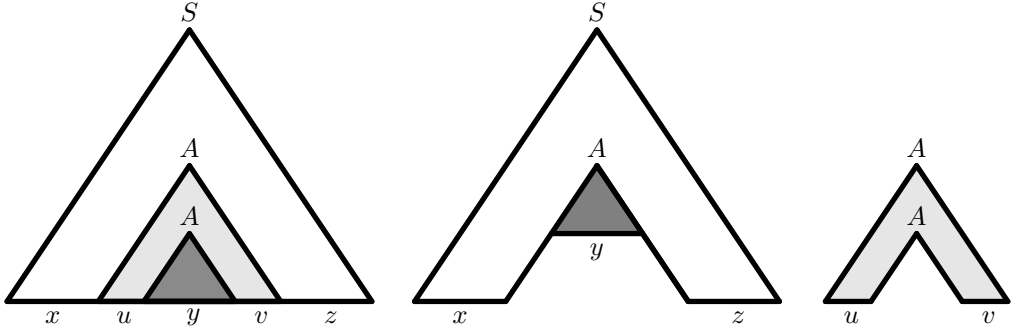

A non-minimal ground tree has the form $S \yields xAz \yields xuAvz  \yields xuyvz$. It can be \emph{decomposed} into a pair of the ground tree $S \yields xAz \yields xyz$ and the block tree $A \yields uAv$, see Fig.~\ref{fig::tree:decomposition}. Formally, we say that a tree $T$ is decomposed to the pair of trees $(T_1, T_2)$ if for some $B \in N : T = A \yields x B z \yields x u B v z  \yields x u \beta v z $, $T_1 = A \yields x B z \yields x \beta z$ and $T_2 = B \yields u B v $; here $\beta$ is either $A$ if $T$ is a block tree or $\beta \in \Sigma^*$ if $T$ is a ground tree (and in this case $A$ is the axiom). We say that $T$ \emph{can be composed} from $T_1$ and $T_2$. Note that $T_1$ and $T_2$ can be composed in several ways if $T_1$ has several $B$ nodes. We denote $$T_1 \circ T_2 = \{ T \mid T \text{ can be decomposed into } (T_1, T_2) \}$$

In addition to derivation of words we consider \emph{derivation of trees} as follows.

\begin{definition}\label{def:DerTree}    
 A derivation of a tree starts with a minimal ground tree and each derivation step leads to a ground tree as well. At a derivation step we choose a node $A$ and a minimal block tree $A \yields xAy$, replace the chosen node by the block tree; the subtree of the chosen node is glued into $A$ from the crown of the block tree. 
\end{definition}

So in the trees derivation a minimal ground tree has the role of the axiom (there can be several ones), and a replacement $A \to T_i$ where $T_i = (A \yields x A z)$ has the role of the production rule.

\begin{definition}\label{def:DerMultiset} 
	Let $S$ be a multiset of trees. We say that $S'$ is \emph{derived} from $S$ if $S'$ is obtained from $S$ by the replacement of two trees $T_1$ and $T_2$ by a tree $T \in T_1\circ T_2$. We denote it as $S \der S'$. We say that a multiset of trees $S$ is \emph{well-formed} such that $S \ders \{ T \}$ for some ground tree $T$.
\end{definition}

It is easy to see, that if $S$ contains only minimal trees and $S \ders \{ T \}$, then there exists a derivation of $T$ such that $S = \{T_0, T_1, \ldots, T_n\}$ where $T_0$ is the ground tree and $T_i, i \geq 1$ is the block tree that was chosen at the $i$-th derivation step.

We define the Parikh image for the trees in a natural way. $\Psi(T) = \Psi(w)$ if $T = (S \yields w)$ and $\Psi(T) = \Psi(xz)$ if $T = (A \yields xAz)$.
The following lemma directly follows from the definitions.

\begin{lemma}\label{Lemma:WellDefineParikhMap} \phantom{x}
\begin{itemize}
	\item If $T, T' \in T_1 \circ T_2$, then $\Psi(T) = \Psi(T') = \Psi(T_1) + \Psi(T_2)$.
	\item If $S \ders \{T\}$ and $S \ders \{T'\}$ then $\Psi(T) = \Psi(T')$.
\end{itemize}
	 
\end{lemma}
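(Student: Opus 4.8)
The plan is to prove both bullet points directly from the definitions, exploiting the fact that the Parikh map only counts letter occurrences and is therefore insensitive to the positions at which a block tree is glued in. The second bullet will follow from the first by an induction on the length of the derivation $S \ders \{T\}$, using the additivity established in the first bullet.

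First I would handle the first bullet. Suppose $T, T' \in T_1 \circ T_2$. By the definition of $\circ$, both $T$ and $T'$ arise by choosing some $B$-node in $T_1$ and inserting the block tree $T_2 = (B \yields u B v)$ there; the two trees differ only in \emph{which} $B$-node was used (and hence in the left/right split $x,z$ of the surrounding context). In either case the total multiset of leaves of $T$ is exactly the multiset of leaves of $T_1$ with the crown-letters of $T_2$ inserted. Concretely, if $T_1 = (A \yields x B z \yields x \beta z)$, then the leaves contributing letters are those of $x\beta z$ together with the $u$ and $v$ produced by $T_2$; rearranging the order of concatenation does not change letter counts. Thus $\Psi(T) = \Psi(x) + \Psi(\beta) + \Psi(z) + \Psi(u) + \Psi(v)$ regardless of the insertion point, which equals $\Psi(T_1) + \Psi(T_2)$ by the definition $\Psi(T_1) = \Psi(x\beta z)$ (ground case) or $\Psi(A\,\text{-context}) = \Psi(xz)$ (block case) together with $\Psi(T_2) = \Psi(uv)$. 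Hence $\Psi(T) = \Psi(T') = \Psi(T_1) + \Psi(T_2)$.

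Next I would prove the second bullet by induction on the number of derivation steps $n$ in $S \ders \{T\}$. If $n = 0$ then $S$ is already a singleton $\{T\}$ and the claim is trivial. For the inductive step, write the first step as $S \der S_1 \ders \{T\}$, where $S_1$ is obtained from $S$ by replacing two trees $T_a, T_b$ with some $T_c \in T_a \circ T_b$. By the first bullet, $\Psi(T_c) = \Psi(T_a) + \Psi(T_b)$, so the total Parikh image summed over the multiset is preserved: $\sum_{R \in S} \Psi(R) = \sum_{R \in S_1} \Psi(R)$. Since this total is invariant under each single derivation step, any terminal singleton $\{T\}$ satisfies $\Psi(T) = \sum_{R \in S} \Psi(R)$. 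Applying the same identity to the derivation $S \ders \{T'\}$ gives $\Psi(T') = \sum_{R \in S} \Psi(R)$ as well, whence $\Psi(T) = \Psi(T')$.

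I do not expect a genuine obstacle here, since both statements really are immediate consequences of commutativity of multiset union under the Parikh count; the lemma is flagged in the excerpt as following ``directly from the definitions.'' The only point demanding mild care is making the bookkeeping in the first bullet explicit, namely checking that the choice of insertion node affects only the context split $(x,z)$ and never the \emph{multiset} of letters, so that the position-independence of $\Psi$ is what does the work. Once that is spelled out, the invariance of the summed Parikh image over a derivation — the heart of the second bullet — is a one-line induction.
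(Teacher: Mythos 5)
Your proof is correct and is exactly the unpacking the paper intends: the paper omits the argument entirely, stating only that the lemma ``directly follows from the definitions,'' and your two steps (position-insensitivity of $\Psi$ for the first bullet, then induction on the derivation length using the invariance of $\sum_{R\in S}\Psi(R)$ for the second) are the standard way to fill that in. No gaps; the invariant you identify is precisely what the paper later uses to define $\Psi(S)=\sum_{T\in S}\Psi(T)$ for well-formed multisets.
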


Lemma~\ref{Lemma:WellDefineParikhMap} implies that we can extend the definition of the Parikh map to well-formed multisets: $\Psi(S) = \sum_{T\in S}\Psi(T)$.

By the pigeonhole principle each minimal tree has depth at most $|N|-1$ (otherwise on the longest path from the root to a leaf there would be a repetition of nonterminals). Since (for a fixed grammar) each node of a tree has a bounded degree, there is only finitely many minimal trees. Let us enumerate them and denote the number of minimal trees by $m$. So, each multiset of minimal trees $S$ has a \emph{corresponding vector} $\vec{v} \in \NN^m$ where $\vec{v}_i$ is the number of occurrences of $T_i$ in $S$. 

\section{Proof}

We begin with the proof idea. Any word $w$ derived from a grammar $G$ has some derivation tree $T_w$ and $\Psi(T_w)= \Psi(w)$. We decompose $T_w$ into a multiset~$S_w$ of minimal trees ($S_w \ders T_w$). Denote by $T_g$ the ground tree from $S_w$. Obtain the set $S'_w$ from $S_w$ by removing repetitions. Note that $S'_w \subseteq M_G$ where $M_G$ is the finite set of minimal trees of grammar $G$. So we have defined a mapping $w \mapsto S'_w$ with a finite codomain $2^{M_G}$. For any multiset $S$ obtained from $S'_w$ by repetition of non-ground trees, there exists a derivation tree $T$ (of $G$) such that $S \ders T$ ($S$ is well-formed). Therefore $\Psi(T) \in \Psi(L(G))$. In other words, 
$$
 \left(\Psi(S'_w) + \sum\limits_{T' \in S'_w\setminus\{T_g\}} t_{T'}\cdot\Psi(T')\right)\in \Psi(L(G))
$$
no matter how we choose $t_{T'} \in \NN$ for each $T'$. Let $S$ be a well-formed multiset with a ground tree $T_g \in S$. Denote by 
\begin{equation}\label{eq:LinDef}
    \Lin(S) = \left\{ \Psi(S) + \sum\limits_{T'_i \in S\setminus\{T_g\}} t_i\cdot\Psi(T'_i) \quad\Bigg|\quad t_i \in \NN\right\}.    
\end{equation}    
The set $\Lin(S)$ is linear by the definition. Since there are only finitely many different sets~$S'_w$, the set~$\Psi(L(G))$ is a finite union of linear sets $\Lin(S'_w)$, so it is semilinear by the definition. The scheme of the proof is depicted in Fig.~\ref{fig::proofdiagram}

\tikzcdset{arrow style=tikz, diagrams={>=stealth}}

\begin{figure}[h]
    \begin{center}     
        \begin{tikzcd}
          L(G) \arrow[rrrrrr, "\Psi"]  \arrow[d] &&&&&& \Psi(L(G)) \\        
          \{T_w : w \in L(G)\} \arrow[d] \\
          \{S_w : w \in L(G)\} \arrow[d] \\
          \{S'_w : w \in L(G)\} \arrow[d] \\
          \{\Lin(S'_w) : w \in L(G)\} \arrow[uuuurrrrrr, "\cup"]
        \end{tikzcd}
      \end{center}
      \caption{Scheme of the proof}\label{fig::proofdiagram}
\end{figure}
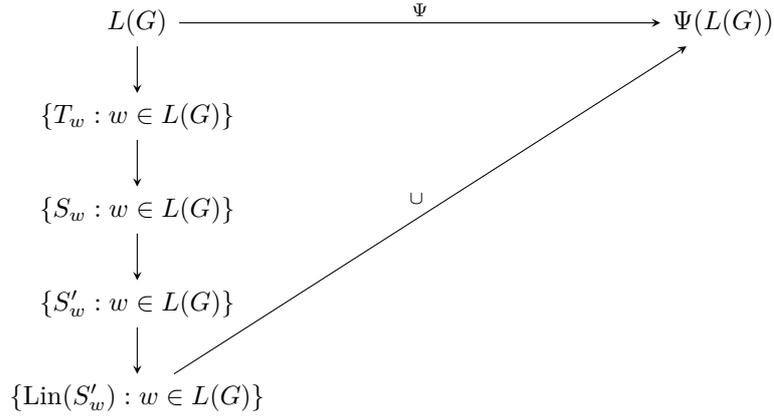

\subsection{Auxiliary Lemmas}

\begin{lemma}\label{lemma:BaseSet}
	Let $S$ be a well-formed multiset of minimal trees and $S''$ be a multiset. Denote by $\vec{v}$ and $\vec{v}''$ the corresponding vectors of $S$ and $S''$, respectively.  Suppose $\vec{v}_i > 0$  iff $\vec{v}_i'' > 0$; then  $S''$ is a well-formed multiset as well. 
\end{lemma}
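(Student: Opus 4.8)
The plan is to characterize well-formedness of a multiset of minimal trees by a reachability condition on nonterminals that is \emph{insensitive to multiplicities}, and then to read off the lemma from the fact that $S$ and $S''$ have the same support (since $\vec{v}_i > 0$ iff $\vec{v}_i'' > 0$). First I would invoke the observation following Definition~\ref{def:DerMultiset}: a well-formed multiset of minimal trees contains a single ground tree $T_0$, which plays the role of the axiom and occurs with multiplicity one, while all its remaining elements are block trees. Since $S''$ has the same support, it has the same unique ground tree $T_0$ and the same collection of block-tree types, differing from $S$ only in how many times each block tree is repeated. (A well-formed multiset necessarily has a unique ground tree, because each composition consumes exactly one block tree and preserves the number of ground trees; so the substantive hypothesis is on the block-tree multiplicities, and I take the ground tree to occur once in $S''$ as well, matching the intended use where only non-ground trees are repeated.)

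Next I would isolate the key structural fact, the \emph{monotonicity} of a derivation step (Definition~\ref{def:DerTree}). When a block tree $B \yields uBv$ is inserted at a node labelled $B$, the root of the inserted block tree is again labelled $B$, and the original subtree, whose root is labelled $B$, is glued at the crown occurrence of $B$; hence a node labelled $B$ survives the step, no previously occurring nonterminal label is lost, and the internal labels of the block tree are added. Thus along any derivation the set of nonterminals occurring as node labels can only grow. For a set $M$ of minimal trees with ground tree $T_0$ I would define $R(M) \subseteq N$ as the least set containing every node label of $T_0$ and closed under the rule: if $B \in R(M)$ and $B \yields uBv$ is a block tree of $M$, then every node label of that block tree belongs to $R(M)$. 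By construction $R(M)$ depends only on the support of $M$.

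The central claim I would then prove is the characterization: \emph{a multiset $M$ of minimal trees with unique ground tree $T_0$ is well-formed if and only if the root of every block tree of $M$ lies in $R(M)$.} For the forward direction, monotonicity shows that the node labels occurring along any successful derivation stay within $R(M)$, so every inserted block tree has its root in $R(M)$. For the backward direction, I would build a derivation greedily: process the nonterminals in the order in which they enter $R(M)$, and whenever a block-tree root first becomes available, insert one copy of that block tree, thereby making its internal labels available in turn; once every block-tree type has been used once, the surplus multiplicities are absorbed by inserting the extra copies of each block tree at its root, which by monotonicity remains available throughout. This greedy scheduling, together with the absorption of surplus copies, is the part that needs the most care and is where monotonicity does the real work; note that it settles uniformly both the case of \emph{increasing} and the more delicate case of \emph{decreasing} multiplicities, since the criterion itself never mentions them.

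Finally I would conclude. Since $S$ is well-formed, the characterization gives that every block-tree root of $S$ lies in $R(S)$. Because $S''$ shares the support of $S$, we have $R(S'') = R(S)$ and $S''$ has exactly the same block-tree roots, so every block-tree root of $S''$ lies in $R(S'')$; applying the characterization to $S''$ then yields that $S''$ is well-formed. I expect the only genuine obstacle to be the monotonicity lemma; once it is in hand, the independence of $R$ from multiplicities, and hence the whole statement, follows immediately.
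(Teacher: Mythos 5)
Your proposal is correct, but it takes a genuinely different route from the paper. The paper reduces the statement to two local claims (increasing or decreasing a single multiplicity by $1$) and handles each by surgery on a fixed derivation of $T$: an extra copy of a block tree $A \yields uAv$ can simply be glued at some occurrence of $A$ in $T$ (which exists because the block tree already participates in the derivation), and a surplus copy is eliminated by skipping its composition step and redirecting any gluings aimed at its nodes to the corresponding nodes of an earlier copy. You instead prove a global characterization: with the ground tree fixed, a multiset of minimal trees is well-formed iff the root of every block tree lies in the reachability closure $R(M)$ generated from the labels of the ground tree, and since $R(M)$ and the set of block-tree roots depend only on the support, the lemma is immediate. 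Your approach buys an explicit, multiplicity-free criterion for well-formedness and handles the increase and decrease cases uniformly, at the price of proving an if-and-only-if whose backward direction (the greedy scheduling by stages of $R(M)$, plus absorption of surplus copies) carries essentially the same combinatorial content as the paper's redirection argument; the paper's proof is more economical but leaves the decrease case as a somewhat delicate ad hoc manipulation. One point in your favor: you explicitly flag that the ground tree must keep multiplicity one (two ground trees can never be composed, so the lemma read literally would fail if its multiplicity were raised), an assumption the paper's proof makes only implicitly by writing $T_i = (A \yields uAv)$. Do make sure, as you note, to justify reordering an arbitrary multiset derivation into the linear ``glue one block tree at a time'' form via the remark after Definition~\ref{def:DerMultiset}, since your monotonicity argument is stated for that form.
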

\begin{proof}
    The statement is equivalent to the conjunction of two claims. The first one: if $\vec{v}_i > 0$ and $\vec{v}_i'' = \vec{v}_i + 1$ while $\vec{v}''_j = \vec{v}_j$ for $j\neq i$, then $S''$ is well-formed. The second one: if $\vec{v}_i > 1$ and $\vec{v}_i'' = \vec{v}_i - 1$ while $\vec{v}''_j = \vec{v}_j$ for $j\neq i$, then $S''$ is well-formed. Starting with the vector $\vec{v}$ and subsequently increasing or decreasing its components by $1$ we can obtain any vector $\vec{v}''$ satisfying the conditions of the lemma. Since at each step the condition of one of the claims hold, we obtain that the condition of the lemma holds in the result.
    
Recall Definitions~\ref{def:DerTree} and \ref{def:DerMultiset} of trees and multiset derivations. Denote by $T_i = (A \yields uAv)$ the $i$-th block-tree. Fix a derivation tree $T$ such that $S \ders \{T\}$ with a derivation of the tree $T$ as well.
    
    Proof of the first claim. Due to the form of $T_i \in S$, the tree $T$ contains the non-terminal~$A$. So we can glue $T_i$ into the place of some occurrence of the non-terminal $A$ and obtain the tree $T'$ as the result. $S''$ is well-formed, since $T'$ is a derivation tree of $G$ by the construction.
    
    Proof of the second claim. If $T_i$ is a subtree of $T$, than it can be removed from $T$ (as in Fig.~\ref{fig::tree:decomposition}) and the resulting tree $T'$ would also be a derivation tree of $G$. Otherwise, let us consider the steps of the fixed derivation of $T$. 
    We change this derivation as follows. Recalling that $\vec{v}_i > 1$, fix two copies $T^1_i$ and $T^2_i$ of the tree $T_i$ in the multiset $S$ such that $T^1_i$ was used earlier than $T^2_i$ in the derivation of $T$.
When $T^2_i$ must be composed with the ground tree, we skip this step. If at some step a tree $T_j = (B \yields u'B v')$ is glued into a node~$B$ of $T^2_i$, then we glue it into the corresponding node $B$ of $T^1_i$ (that is already in the ground tree). So, this modification yields a derivation $S \ders \{T', T^2_i\}$, where $T'$ is some ground tree. But then we could do the same starting from $S''$ and obtain $S'' \ders \{T'\}$. Therefore $S''$ is well-formed.    
\end{proof}

\begin{corollary}\label{cor:SprimeInPsiL}
    For any $w \in L(G) : \Lin(S'_w) \subseteq \Psi(L(G))$.
\end{corollary}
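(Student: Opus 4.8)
The plan is to unwind the definitions so that the corollary becomes an almost immediate consequence of Lemma~\ref{lemma:BaseSet}. Fix $w \in L(G)$ and recall the construction of $S'_w$: it is obtained from the well-formed multiset $S_w$ (which derives the derivation tree $T_w$) by deleting repetitions, so it is a genuine \emph{set} of minimal trees sharing exactly the support of $S_w$. I would first observe that $S'_w$ is itself well-formed: its corresponding vector $\vec v'$ has a nonzero component exactly where the vector $\vec v$ of $S_w$ does, so Lemma~\ref{lemma:BaseSet} (applied with $S = S_w$ and $S'' = S'_w$) yields well-formedness of $S'_w$. In particular $S'_w$ contains the unique ground tree $T_g$ and derives some derivation tree of $G$.

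Next I would show that every vector in $\Lin(S'_w)$ lies in $\Psi(L(G))$. By \eqref{eq:LinDef}, an arbitrary element of $\Lin(S'_w)$ has the form $\Psi(S'_w) + \sum_{T'_i \in S'_w\setminus\{T_g\}} t_i\cdot\Psi(T'_i)$ for some choice of $t_i \in \NN$. The key step is to exhibit a single well-formed multiset $S$ whose Parikh image equals this vector. I would take $S$ to be the multiset obtained from $S'_w$ by including each non-ground tree $T'_i$ with multiplicity $1 + t_i$ (and keeping $T_g$ with multiplicity $1$). The corresponding vector of $S$ has precisely the same support as that of $S'_w$, so Lemma~\ref{lemma:BaseSet} again guarantees that $S$ is well-formed. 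Hence $S \ders \{T\}$ for some ground tree $T$, i.e.\ $T$ is a derivation tree of $G$ and thus $\Psi(T) \in \Psi(L(G))$.

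It then remains to check that $\Psi(S)$ equals the prescribed vector, which follows from the additivity of the Parikh map on well-formed multisets established after Lemma~\ref{Lemma:WellDefineParikhMap}: since $\Psi(S) = \sum_{T\in S}\Psi(T)$ and $S$ contains each $T'_i$ with multiplicity $1+t_i$, we get $\Psi(S) = \Psi(S'_w) + \sum_{T'_i} t_i \cdot \Psi(T'_i)$, and by the same lemma $\Psi(T) = \Psi(S)$ for the derived ground tree $T$. Combining, $\Psi(T)$ is exactly the chosen element of $\Lin(S'_w)$ and belongs to $\Psi(L(G))$. Since $t_i$ were arbitrary, $\Lin(S'_w) \subseteq \Psi(L(G))$.

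I do not expect a genuine obstacle here; the corollary is essentially the payoff of the preceding lemma. The only point requiring care is the bookkeeping of multiplicities: one must confirm that adding copies of non-ground trees (rather than the ground tree) keeps the support invariant, so that the hypothesis $\vec v_i > 0 \iff \vec v''_i > 0$ of Lemma~\ref{lemma:BaseSet} is met, and that the ground tree $T_g$ stays unique so that $\Lin$ is well-defined on $S$. Both are immediate from the fact that $S'_w$ already has nonzero multiplicity on exactly the trees we are duplicating.
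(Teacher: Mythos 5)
Your proposal is correct and follows exactly the route the paper intends: the corollary is stated without a separate proof precisely because it is the instantiation of Lemma~\ref{lemma:BaseSet} sketched in the proof-idea paragraph (inflate the multiplicities of the non-ground trees in $S'_w$, invoke the lemma to get well-formedness, and use the additivity of $\Psi$ from Lemma~\ref{Lemma:WellDefineParikhMap}). Your extra care about the support condition and the uniqueness of the ground tree is exactly the right bookkeeping and matches the paper's setup.
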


\begin{lemma}\label{lemma:wInSprime} For any $w \in L(G)$:
    $$ \Psi(w) \in  \Lin(S'_w).$$
\end{lemma}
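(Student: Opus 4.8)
The plan is to reduce the statement to a short arithmetic identity relating the multiset $S_w$ to its underlying set $S'_w$, using only the invariance of the Parikh map already proved. First I would note that $S_w \ders \{T_w\}$, so by Lemma~\ref{Lemma:WellDefineParikhMap} (extended to well-formed multisets) we have $\Psi(w) = \Psi(T_w) = \Psi(S_w)$. Hence it suffices to exhibit $\Psi(S_w)$ as a point of the linear set $\Lin(S'_w)$, and the entire argument becomes a comparison of multiplicities.

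I would then introduce the multiplicities explicitly. Let $c_i \ge 1$ denote the number of occurrences of the distinct minimal tree $T_i \in S'_w$ inside the multiset $S_w$. By definition of the extended Parikh map, $\Psi(S_w) = \sum_{T_i \in S'_w} c_i\cdot \Psi(T_i)$, while $\Psi(S'_w) = \sum_{T_i \in S'_w} \Psi(T_i)$ counts each tree exactly once. Subtracting yields the key identity
$$\Psi(S_w) = \Psi(S'_w) + \sum_{T_i \in S'_w} (c_i-1)\cdot\Psi(T_i),$$
and since $c_i \ge 1$ each coefficient $t_i := c_i - 1$ lies in $\NN$; this nonnegativity is precisely what places the right-hand side inside a linear set.

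The one point demanding care — and the step I expect to be the crux — is that the sum in the definition~\eqref{eq:LinDef} of $\Lin$ runs over $S'_w\setminus\{T_g\}$, so the ground tree must not contribute a varying term. To reconcile this I would argue that $T_g$ occurs in $S_w$ with multiplicity exactly $1$, whence $c_g - 1 = 0$ and its summand vanishes. This holds because the decomposition of $T_w$ into minimal trees produces a single ground tree together with block trees only, and a ground tree can never equal a block tree (their crowns differ in containing a nonterminal). Dropping the zero term rewrites the identity as $\Psi(S_w) = \Psi(S'_w) + \sum_{T_i \in S'_w\setminus\{T_g\}} t_i\cdot\Psi(T_i)$, which matches~\eqref{eq:LinDef} verbatim and certifies $\Psi(w)\in\Lin(S'_w)$. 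That $\Lin(S'_w)$ is even defined follows from Lemma~\ref{lemma:BaseSet}, since $S'_w$ shares the support of the well-formed multiset $S_w$ and contains $T_g$.
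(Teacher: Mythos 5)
Your proof is correct and follows essentially the same route as the paper: both reduce the claim to the arithmetic identity $\Psi(S_w) = \Psi(S'_w) + \sum_i (c_i - 1)\Psi(T_i)$ with the multiplicities shifted down by one, and both rely (yours explicitly, the paper's implicitly) on the ground tree occurring exactly once so that its term drops out of the sum over $S'_w\setminus\{T_g\}$. The only difference is that the paper proves the marginally stronger containment $\Lin(S_w) \subseteq \Lin(S'_w)$ before specializing to $\Psi(S_w)$, which is not needed for the lemma as stated.
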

\begin{proof}
    From the definitions it follows that $\Psi(w) = \Psi(S_w) \in \Lin(S_w)$. We prove that $\Lin(S_w) \subseteq \Lin(S'_w)$. 
Note that $\Psi(S_w) = \Psi(S'_w) + \sum_{i=1}^m (\vec{v}_i-1)\Psi(T_i)$, where $\vec{v}$ is the corresponding vector of $S_w$ and $T_i$ is the $i$-th minimal tree (in the enumeration above). By the definition a vector $\vec{u} \in \Lin(S_w)$ has the form
$$
 \vec{u} = \Psi(S_w) + \sum\limits_{T'_j \in S_w\setminus\{T_g\}} t'_j\cdot\Psi(T'_j) = \Psi(S_w) + \sum\limits_{i =1}^n t_i\cdot\Psi(T_i),
$$
where in the second equality we put together all $T'_j$'s that are copies of the same tree $T_i$. Note that each $T'_j$ is a minimal block tree, so $T'_j = T_i$ for some $i$. Thus,
\[
\phantom{wtfwtfwtfwtfw\!} \vec{u} = \left(\Psi(S'_w) + \sum_{i : \vec{v}_i > 0} (t_i+\vec{v}_i-1)\Psi(T_i)\right) \in \Lin(S'_w) \qedhere
\]
\end{proof}

\subsection{Proof of Parikh's theorem}
Let $G$ be a context-free grammar generating $L$. For a word $w \in L$ denote by $T(w)$ the set of all derivation trees $T_w$. Note that $T(w) \neq \es$ and it can be even infinite if there are $\eps$-rules in $G$. Denote by $S(w) = \{ S_w \mid \exists T_w \in T(w) : S_w \ders \{T_w\} \}$ where $S_w$ is a multiset consisting only of minimal trees (as before). Finally $S'(w) = \{ S'_w \mid S_w \in S(w)\}$; recall that $S'_w$ is the set obtained from $S_w$ by deleting the duplicates. Now we show that $S'(w)$ is finite for each $w$ and moreover the union $\cup_{w\in L} S'(w) = S'(L)$ is finite as well. Recall that there are finitely many minimal trees, and we denote them by $T_1, \ldots, T_m$. Therefore, each $S_w$ has a corresponding $m$-dimensional vector $\vec{v}$ such that
 $$\Psi(w) = \Psi(S_w) = \sum\limits_{i=1}^m \vec{v}_i\cdot\Psi(T_i).$$  
The corresponding vector $\vec{v}'$ of $S'_w$ is a 0-1 vector such that $\vec{v}'_i = 1$ iff $\vec{v}_i > 0$. So since there are only finitely many 0-1 vectors of length $m$ and each $S'_w \in S'(L)$ has a corresponding 0-1 vector $\vec{v}'$, then the set $S'(L)$ is finite as well.

Putting everything together, by Lemma~\ref{lemma:wInSprime}
$$
\Psi(L) \subseteq \bigcup_{S' \in S'(L)} \Lin(S')
$$
and by Corollary~\ref{cor:SprimeInPsiL}
$$
\bigcup_{S' \in S'(L)} \Lin(S') \subseteq \Psi(L).
$$
Since the set $S'(L)$ is finite and each set $\Lin(S')$ is linear by Eq.~\eqref{eq:LinDef}, the equality 
$$
\Psi(L) = \bigcup_{S' \in S'(L)} \Lin(S')
$$
proves Parikh's theorem.
\qed

\bibliography{parikh}

\end{document}